\newtheorem{theorem}{Theorem}[section]
\newtheorem{lemma}[theorem]{Lemma}
\newtheorem{proposition}[theorem]{Proposition}
\newtheorem{definition}[theorem]{Definition}
\newtheorem{example}[theorem]{Example}
\pgfplotsset{compat=newest}
\newcommand{\argmax}{\mathop{\rm arg\,max}\limits}
\newcommand{\E}{\mathbb{E}}
\newcommand{\BR}{\mathrm{BR}}
\newcommand{\fBR}{f_{\texttt{BR}}}
\newcommand{\supp}{\mathop{\rm supp}}
\newcommand{\U}{\mathcal{U}}
\title{Non-zero-sum Stackelberg Budget Allocation Game for Computational Advertising}
\date{}
\author[1]{Daisuke Hatano}
\author[2]{Yuko Kuroki}
\author[3]{Yasushi Kawase}
\author[4]{Hanna Sumita}
\author[5]{Naonori Kakimura}
\author[6]{Ken-ichi Kawarabayashi}
\affil[1]{RIKEN AIP \texttt{daisuke.hatano@riken.jp}}
\affil[2]{The University of Tokyo \texttt{ykuroki@ms.k.u-tokyo.ac.jp}}
\affil[3]{Tokyo Institute of Technology and RIKEN AIP \texttt{kawase.y.ab@m.titech.ac.jp}}
\affil[4]{Tokyo Metropolitan University \texttt{sumita@tmu.ac.jp}}
\affil[5]{Keio University \texttt{kakimura@math.keio.ac.jp}}
\affil[6]{National Institute of Informatics \texttt{k\_keniti@nii.ac.jp}}
\begin{document}

\maketitle

\begin{abstract}
Computational advertising has been studied to design efficient marketing strategies that maximize the number of acquired customers.
In an increased competitive market, however, a market leader (a \emph{leader}) requires the acquisition of new customers as well as the retention of her loyal customers because there often exists a competitor (a \emph{follower}) who tries to attract customers away from the market leader.
In this paper, we formalize a new model called the \emph{Stackelberg budget allocation game with a bipartite influence model} by extending a budget allocation problem over a bipartite graph to a Stackelberg game.
To find a \emph{strong Stackelberg equilibrium}, a standard solution concept of the Stackelberg game, we propose two algorithms: an approximation algorithm with provable guarantees and an efficient heuristic algorithm.
In addition,
for a special case where customers are disjoint,
we propose an exact algorithm based on linear programming.
Our experiments using real-world datasets demonstrate that our algorithms outperform a baseline algorithm even when the follower is a powerful competitor.
\end{abstract}

\section{Introduction}
An aim of \emph{computational advertising} is to find the best advertisement that can help build customers loyalty.
More specifically, the purpose of advertisers is to devise an optimum allocation of budgets to \emph{media}, such as newspapers, radio stations, TV, and websites,
in order to maximize the number of activated customers.
Recently, Alon \emph{et al.}~\cite{AlonGT12} proposed a model to deal with a simple case of the problem, called a \emph{bipartite influence model}.
In this study, we shall extend the model by integrating a game-theoretic framework, called the non-zero-sum \emph{Stackelberg game} framework. Let us explain the model more precisely below.

In the bipartite influence model, we consider a bipartite graph where one side is a set of \emph{media}, the other is a set of \emph{customers}, and each edge is associated with a probability.
Intuitively, each edge between a medium and a customer indicates that the customer is influenced by the medium with some given probability that depends on the budget allocated to the medium.
We aim to allocate budgets on media so that the expected number of activated customers is maximized.
The problem can be formulated as a combinatorial optimization problem.
Constant-factor approximation algorithms for the problem have been developed in a framework of submodularity~\cite{AlonGT12,MiyauchiIFK15,NemhauserWF78}.

In this paper, we shall try to extend the above-mentioned model to deal with a situation of a duopoly where a market leader has occupied the market of a certain product for a long time and a competitor tries to break into the market.
The competitor tries to grab the share of the market by aggressively marketing its product.
On the other hand, the market leader wants to gain customers and retain her loyal customers simultaneously.
This implies that the leader's gain does not necessarily result in the competitor's loss.
In order to capture the dynamics of this market, we exploit a \emph{Stackelberg game}~\cite{SimaanC73} framework to model the interactions between the market leader and the competitor.
The Stackelberg game is a two-player two-period game, in which one player (a \emph{leader}) can commit to an action before the other player (a \emph{follower}) plays an action. 
A standard solution concept of this game is the \textit{strong Stackelberg equilibrium}, which is an optimal solution maximizing the leader's utility under the constraint that the follower plays
a best response to the leader's action
(i.e., intended to maximize the follower's utility).

The Stackelberg game matches to model our problem setting because the leader wants to increase the number of activated customers, and at the same time, prevent the outflow of her customers, which is achieved by finding a strong Stackelberg equilibrium.
In a strong Stackelberg equilibrium, the leader plays a \emph{mixed strategy} and the follower plays a \emph{pure strategy}, where pure strategy and mixed strategy correspond to a budget allocation and a probability distribution over the pure strategies, respectively.

In this paper, we propose a new model called the \emph{Stackelberg budget allocation game with a bipartite influence model}, which is an extension of the budget allocation problem presented in~\cite{AlonGT12}. 
The difficulties of our game lie in the leader's utility function.
Our game belongs to a non-zero-sum game, and the utility function is a submodular~(nonlinear) function even when the follower's action is fixed.
It is hard to construct an approximation algorithms by the following reasons:
(i) the cumbersome constraint that the follower optimally responds and 
(ii) the leader's utility may be non-linearly changed by a follower's strategy.
Thus, existing techniques for submodular functions cannot be directly applied to our problem.
Furthermore, the leader's utility function is not necessarily monotone, that is, the utility does not always increase in the number of allocated budgets.
This entails the increment of the number of pure strategies.
To design an efficient algorithm is an arduous task.

In this paper, we propose three efficient algorithms:
\begin{itemize}
    \item We design an approximation algorithm with theoretical guarantee. 
    The key idea to construct an approximation algorithm is to create a zero-sum game close to the original non-zero-sum game, and to find an approximate minimax strategy of the zero-sum game with the aid of submodularity.
    
    \item We give an efficient heuristic algorithm that repeatedly finds a leader's pure strategy greedily and uniformly picks from the pure strategies. 
    The running time is polynomial in the leader's budget. 
    This heuristic can deal with a situation that the leader should not spend up her whole budget due to the non-monotonicity of the utility function. 
    We also evaluate its performance by numerical experiments.

    \item If the customers are disjoint, we prove that a strong Stackelberg equilibrium can be found 
    efficiently even when the leader has exponentially many pure strategies
    by using the multiple linear programming (LP for short) formulations.
    The point in the disjoint case is that we can aggregate a leader's mixed strategy to a fractional budget allocation. 
    At the same time we can recover a mixed strategy in a compact representation without loss of the leader's utility. 
    This enables us to save memories to keep a mixed strategy and reduce the size of LP instances. 
\end{itemize}

The rest of the paper is organized as follows: We describe related work in Section~\ref{sec:rel} and define notations in Section~\ref{sec:pre}.
We formalize our model and analyze its (mathematical) properties in Section~\ref{sec:model}.
We then provide an approximation and a heuristic algorithms in Section~\ref{sec:alg}, and provide an exact algorithm for the disjoint customers in Section~\ref{sec:ana}.
In Section~\ref{sec:exp}, we empirically show the performance of our algorithm, and finally we conclude the study in Section~\ref{sec:con}.

\section{Related work} \label{sec:rel}
Our problem setting can be viewed as a non-monotone non-zero-sum Stackelberg game with submodular functions.
Vanek \emph{et~al.}~\cite{VanekYJBTP12} modeled a non-zero-sum Stackelberg game with submodular functions where the defender (the leader) cares about minimizing the loss of her utility.
In our game, the leader maximizes her utility incorporating her loss against the follower's action.
Thus, the goal of the leader is different. 
Moreover, direct application of their technique to find a Stackelberg equilibrium does not seem to work well in our setting.
Recently, Wilder \emph{et~al.}~\cite{WilderV19} extended a bipartite influence model to a zero-sum Stackelberg game, which is closely related to our problem setting.
They proved that the problem is APX-hard, while it has FPTAS for some special cases.

In combinatorial optimization and machine learning, approximation algorithms for maximizing submodular functions under certain constraints have been extensively studied~\cite{KrauseG14}.
Our problem can be viewed as a submodular maximization under a best-response constraint, which is more cumbersome than typical constraints in the submodular maximization literature~(e.g., cardinality constraint and knapsack constraint).

The budget allocation problem with the bipartite influence model has been extended in~\cite{MaeharaYK15,MiyauchiIFK15,HatanoFMK15,StaibJ17}.
In particular, some formulations have incorporated the view of the multi-agent system. 
Maehara \emph{et~al.}~\cite{MaeharaYK15} extended a budget allocation in the bipartite influence model to a strategic form game, called \emph{the budget allocation game with a bipartite influence model}.
Hatano \emph{et~al.}~\cite{HatanoFMK15} extended the budget allocation problem to the problem with two participants; advertiser and match maker.
In the problem, there exist multiple advertisers who cooperatively maximize the influence on customers and single match maker who allocates slots of media to advertisers.

\section{Preliminary} \label{sec:pre}
Let $\mathbb{Z}_+$ be the set of non-negative integers.
For an integer $k \in \mathbb{Z}_+$, let $[k]$ be the set $\{1,2,\ldots,k\}$.
In this section,
we describe the budget allocation problem with a bipartite influence model and the Stackelberg game.

\subsection{Bipartite influence model}
Let $G = (U,V;E)$ be a bipartite graph, where $(U,V)$ is a bipartition and $E \subseteq U \times V$ is a set of edges.
Each vertex $u \in U$ corresponds to a medium and $v \in V$ corresponds to a customer. Let $n$ and $m$ be the sizes of $U$ and $V$, respectively.
Each edge $uv \in E$ is associated with a probability $p_{uv} \in [0,1]$, which means that allocating a budget to medium $u \in U$ activates customer $v \in V$ with probability $p_{uv}$.
We assume that the activation events are independent.
The advertiser has a total available budget of $k \in \mathbb{Z}_+$, and each medium $u \in U$ has a slot to which the advertiser can allocate her budget.
The goal is to find the optimal budget allocation $z \in \{0,1\}^U$ with $\sum_{u \in U} z_u \le k$ that maximizes the number of activated customers.
Throughout this paper, we identify a set $S$ of media with its characteristic vector $z_S \in \{0,1\}^U$. 
A probability that a customer $v \in V$ is activated by the advertiser's trial from media in $U$ is given by
\begin{equation}\label{eq:Pvz}
\textstyle
P_{v}(z) = 1 - \prod_{u \in N_v:\,z_u=1}(1-p_{uv}), 
\end{equation}
where $N_v = \{u \mid uv \in E\}$ is the set of the neighbors of $v$.
The expected number of customers activated through the budget allocation $z$ is given by $\sum_{v \in V} P_{v}(z)$.
The objective of the budget allocation problem with a bipartite influence model is to find $z$ that maximizes $\sum_{v \in V}P_v(z)$ subject to $\sum_{u \in U}z_u \le k$.

The function $P_v(z)$ is shown to be a monotone submodular function~\cite{SomaKIK14}.
Here, a function $f: \{0,1\}^n \rightarrow \mathbb{R}$ is \emph{submodular} if it satisfies
$f(x) + f(y) \ge f(x \vee y) +f(x \wedge y)$
for all $x, y \in \{0,1\}^n$, where $x \vee y$ and $x \wedge y$ denote the vector of component-wise maxima and minima, respectively, i.e., $(x \vee y)_i = \max\{x_i,y_i\}$ and $(x \wedge y)_i = \min\{x_i,y_i\}$.
A function $f$ is \emph{monotone} if it satisfies $f(x) \le f(y)$ for all $x \le y$, i.e., $x_i \le y_i$ for all $i \in [n]$.
Thus the budget allocation problem is a special case of the submodular maximization problem with a cardinality constraint, and it is well-known that the problem is NP-hard~\cite{CornuejolsFN77} and has a $(1-1/e)$-approximation algorithm~\cite{NemhauserWF78}.

\subsection{Stackelberg game}
The \emph{Stackelberg game} is played between two players: the leader and the follower.
Both players can play a mixed strategy, but it is sufficient to consider that the follower plays a pure strategy.
Let $S_L$ and $D_F$ be the sets of pure strategies of the leader and the follower, respectively.
We denote the set of mixed strategies of the leader by $D_L = \{x \in [0,1]^{S_L} \mid \sum_{s \in S_L} x_s = 1\}$, each of which is a probability distribution on pure strategies in $S_L$.
We define $f: D_L \times D_F \rightarrow \mathbb{R}$ and $g: D_L \times D_F \rightarrow \mathbb{R}$ as utility functions of the leader and the follower, respectively.
We define an instance of the game as $\mathcal{G}=(D_L,D_F,f,g)$.
Let $\BR(x) = \argmax\nolimits_{y \in D_F}g(x, y)$ be the set of best responses of the follower against $x$.
In this game, the leader will commit to play a mixed strategy before the follower plays his strategy.
Thus the leader needs to find a mixed strategy $x$ maximizing $f(x,y)$ under the constraint that the follower would choose a best-response pure strategy $y\in \BR(x)$.
More precisely, the goal of this game is 
to find a leader's mixed strategy that forms a strong Stackelberg equilibrium, as indicated below. 

\begin{definition} \label{def:se}
A \emph{strong Stackelberg equilibrium} of $\mathcal{G}$ is a pair $(x^*,y^*)$ that satisfies 
\begin{conference}
$f(x^*,y^*) \ge f(x, y)$
\end{conference}
\begin{fullpaper}
\[f(x^*,y^*) \ge f(x, y)\]
\end{fullpaper}
for all $x \in D_L$, $y \in \BR(x)$, and $y^* \in \BR(x^*)$.
\end{definition}

\section{Stackelberg budget allocation game} \label{sec:model}
In this section, we extend the budget allocation problem with a bipartite influence model to a Stackelberg game. 
For any set $S_L$ and $s \in S_L$,  we denote by $\chi_{s}$ a characteristic vector in $\{0, 1\}^{S_L}$ such that 
$(\chi_s)_{s'} = 1$ for $s' = s$ and $(\chi_s)_{s'} = 0$ for $s'\ne s$ ($s'\in S_L$).
For a mixed strategy $x\in D_L$, the support of $x$ is the set of pure strategies that is played with non-zero probability under $x$,
i.e.,
$\supp(x) = \{s \in S_L \mid x_s > 0\}$.

\subsection{Definition}
Let $G = (U,V;E)$ be a bipartite graph consisting of a set $U$ of $n$ media, a set $V$ of $m$ customers, and a set $E$ of edges between them.
For each $uv \in E$, we denote by $p_{uv}$ a probability that a customer $v$ is activated through a medium $u$ by a leader's or a follower's trial, and by $p_{F,uv}$ a probability that a medium $u$ activates a customer $v$ who has been already activated by the leader.
Two probabilities intuitively mean that $p_{uv}$ is a basic activation probability in the market, and $p_{F,uv}$ is a probability that the follower recaptures customers who were activated by the leader.
Let $k_L$ and $k_F$ be the budgets of the leader and the follower, respectively.
An instance of the \emph{Stackelberg budget allocation game with a bipartite influence model} is parameterized by $\phi = (G= (U,V;E), \{p_{uv}\}_{uv \in E}, \{p_{F,uv}\}_{uv \in E},k_L, k_F)$.

We construct a Stackelberg game $\mathcal{G}$ from an instance $\phi$ as follows.
A pure strategy for the leader (respectively the follower) is a set of at most $k_L$ media (respectively $k_F$ media). 
$D_L$ and $D_F$ of the game $\mathcal{G}$ are defined by setting $S_L=\{z \in \{0,1\}^U \mid \sum_{u \in U}z_u \leq k_L\}$ (or equivalently $S_L=\{S \subseteq U \mid |S|\leq k_L\}$) and $D_F =\{y \in\{0,1\}^U \mid \sum_{u \in U} y_u \le k_F\}$.

Let $v \in V$ be any customer.
Let $z$ and $y$ be a leader's and a follower's pure strategies, respectively. 
The probability that the leader activates $v$ is given by the equation~(\ref{eq:Pvz}).
If $v$ is not activated by the leader, then the activation probability for the follower is given by the same basic probability, that is $P_{v}(y)$.
If $v$ is activated by the leader, then the probability that the follower attracts a customer $v \in V$ away from the leader is $P_{F,v}(y) = 1 - \prod_{u \in N_v:y_u=1} (1-p_{F,uv})$.

\begin{example}
We explain the difference between $p_{uv}$ and $p_{F, uv}$. 
Consider a game instance illustrated in Figure~\ref{fig:e}. 
There are three media $u_1, u_2, u_3$ and four customers $v_1, v_2, v_3, v_4$.
For an arbitrary edge $uv$, $p_{uv} = 0.8$ and $p_{F,uv} = 0.5$.
The budget for the leader and the follower is $k_L=2$ and $k_F=1$, respectively.
At first, the leader plays a mixed strategy $x$ that chooses $\{u_1, u_2\}$ w.p.\ $1$.
Suppose the situation in Figure~\ref{subfig:e_a} where $\{u_1, u_2\}$ is chosen and $v_1$, $v_2$, and $v_3$ are activated w.p.\ 0.8, who are shown in gray. 
After that, the follower plays a pure strategy that chooses $\{u_3\}$. 
In Figure~\ref{subfig:e_b}, the customer $v_2$ switches to the follower w.p.\ $0.5$ if $v_2$ is activated by the leader, and otherwise $v_2$ is activated w.p.\ $0.8$. 
Thus, the probability to activate $v_2$ is $0.96\cdot 0.5+0.04\cdot 0.8=0.512$.  
In addition, $v_4$ is activated w.p.\ $0.8$ because $v_4$ is non-activated. 
\end{example}
\begin{figure}[htbp]
\begin{minipage}[b]{.45\columnwidth}
\centering
\subfigure[The leader's turn]{
\begin{tikzpicture}[xscale=.5,yscale=.4, domain=0:10, cn/.style={circle,fill=white,draw=black,inner sep=2pt}]
\node[cn] at (1,7) (u1) {};
\node[cn] at (1,5) (u2) {};
\node[cn] at (1,3) (u3) {};
\node[cn,fill=lightgray] at (5,8) (v1) {};
\node[cn,fill=lightgray] at (5,6) (v2) {};
\node[cn,fill=lightgray] at (5,4) (v3) {};
\node[cn] at (5,2) (v4) {};

 \draw (1,6) circle [x radius=.5, y radius = 2];

\draw[thick] (u1) -- (v1);
\draw[thick] (u1) -- (v2);
\draw[thick] (u2) -- (v2); 
\draw[thick] (u2) -- (v3);
\draw[thick] (u3) -- (v2);
\draw[thick] (u3) -- (v4);

\draw (u1) node[left,xshift=-5pt] {$u_1$};
\draw (u2) node[left,xshift=-5pt] {$u_2$};
\draw (u3) node[left,xshift=-5pt] {$u_3$};

\draw (v1) node[right,xshift=2pt] {$v_1$};
\draw (v2) node[right,xshift=2pt] {$v_2$};
\draw (v3) node[right,xshift=2pt] {$v_3$};
\draw (v4) node[right,xshift=2pt] {$v_4$};

\end{tikzpicture}\label{subfig:e_a}}
\end{minipage}%
\begin{minipage}[b]{.1\columnwidth}
\mbox{}
\end{minipage}%
\begin{minipage}[b]{.45\columnwidth}
\centering
\subfigure[The follower's turn]{
\begin{tikzpicture}[xscale=.5,yscale=.4, domain=0:10, cn/.style={circle,fill=white,draw=black,inner sep=2pt}]
\node[cn] at (1,7) (u1) {};
\node[cn] at (1,5) (u2) {};
\node[cn,fill=black] at (1,3) (u3) {};
\node[cn,fill=lightgray] at (5,8) (v1) {};
\node[cn,fill=black] at (5,6) (v2) {};
\node[cn,fill=lightgray] at (5,4) (v3) {};
\node[cn,fill=black] at (5,2) (v4) {};

 \draw (1,6) circle [x radius=.5, y radius = 2];

\draw[thick] (u1) -- (v1);
\draw[thick] (u1) -- (v2);
\draw[thick] (u2) -- (v2); 
\draw[thick] (u2) -- (v3);
\draw[thick] (u3) -- (v2);
\draw[thick] (u3) -- (v4);

\draw (u1) node[left,xshift=-5pt] {$u_1$};
\draw (u2) node[left,xshift=-5pt] {$u_2$};
\draw (u3) node[left,xshift=-5pt] {$u_3$};

\draw (v1) node[right,xshift=2pt] {$v_1$};
\draw (v2) node[right,xshift=2pt] {$v_2$};
\draw (v3) node[right,xshift=2pt] {$v_3$};
\draw (v4) node[right,xshift=2pt] {$v_4$};

\end{tikzpicture}\label{subfig:e_b}}
\end{minipage}
\vskip -3mm
\caption{The difference between $p_{uv}$ and $p_{F,uv}$.}\label{fig:e}
\end{figure}
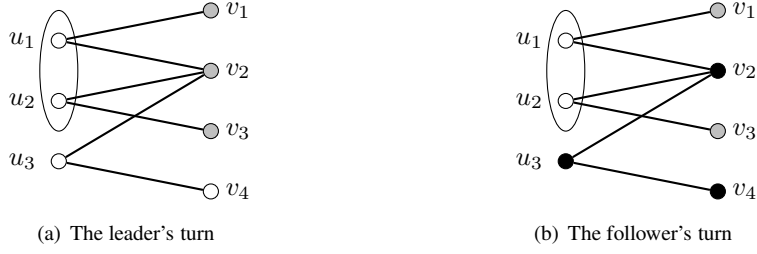

The utility functions $f$ and $g$ 
are given as follows. 
The expected number of customers that are activated by the leader but do not shift to the follower is given by
\begin{align*}
\textstyle
f(z,y)= \sum_{v \in V} P_v(z)(1 - P_{F,v}(y)). 
\end{align*}
The expected number of activated customers for the follower is given by
\begin{align*}
\textstyle
g(z,y)= \sum_{v \in V}&\bigl(P_v(z)P_{F,v}(y)+(1-P_v(z))P_v(y)\bigr).
\end{align*}

When the leader uses a mixed strategy $x$, we abuse the notation and write $P_{v}(x) = \E_{z \sim x}[P_v(z)]$. 
Here, for a probability distribution $x$ over a domain $D$, $z \sim x$ means that we sample $z \in D$ from the distribution $x$.
Similarly, we write $f(x,y)=\E_{z \sim x}[f(z,y)] = \sum_{v \in V} P_v(x)(1 - P_{F,v}(y))$, and $g(x,y)=\E_{z \sim x}[g(z,y)]=\sum_{v \in V}\bigr(P_v(x)P_{F,v}(y)+(1-P_v(x))P_v(y)\bigl)$.

The goal of the Stackelberg budget allocation game with a bipartite influence model is to find a leader's mixed strategy $x$ in a strong Stackelberg equilibrium of the game $\mathcal{G}$.
We define a function $\fBR$ that receives a mixed strategy $x \in D_L$ and returns the leader's utility when the follower takes a best response, i.e., 
\begin{align*}
\fBR(x)=\max\left\{f(x,y)\mid y\in\BR(x)\right\}. 
\end{align*}
We aim to solve
\begin{align}
\max \quad \fBR(x) \quad \quad {\rm s.t.} \quad x\in D_L.\label{eq:problem}
\end{align}
Note that $x$ is an optimal solution to \eqref{eq:problem} if and only if $(x, y)$ is a strong Stackelberg equilibrium, where $y$ is a best response against $x$. 
We can evaluate $\fBR(x)$ for $x\in S_L$
in ${\rm O}(|D_F|\cdot |E| \cdot |\supp(x)|)$ time by evaluating $f(x, y)$ $|D_F|$ times. 
To obtain the value of $f(x, y)$, we evaluate $f(z,y)$ for $z \in \supp(x)$, which takes ${\rm O}(|E|)$ time. 

We now see that the leader's optimal strategy may not be a pure strategy.
\begin{example}
\begin{fullpaper}
Consider an instance with $U=\{u_1,u_2,u_3\}$, $V=\{v_1,v_2,v_3,v_4\}$, 
$E=\{u_1v_1, u_1v_2, u_2v_2, u_2v_3, u_3v_4\}$,
and $k_L=k_F=1$.
For a set $S$ of media, 
$\chi_{S}$ denotes a unit vector in $\{0,1\}^{S_L}$ with $(\chi_{S})_S = 1$. 
The instance including activation probabilities is depicted in Figure~\ref{subfig:exs_a}.
\end{fullpaper}
\begin{conference}
Consider an instance depicted in Figure~\ref{subfig:exs_a} with $k_L=k_F=1$.
\end{conference}
In this case, an optimal strategy for the leader is
$x^*=0.5\chi_{\{u_1\}}+0.5\chi_{\{u_2\}}$
and $\fBR(x^*)=1.1$ where the best response of the follower is $\{u_1\}$.
However, $\fBR(\chi_{\{u_1\}})=\fBR(\chi_{\{u_2\}})=0.6$
and $\fBR(\chi_{\{u_3\}})=0.599$.
\end{example}

We next see that 
the leader may not use the whole budget in her optimal strategy.
\begin{example}\label{ex:budget}
\begin{fullpaper}
Consider an instance depicted in Figure~\ref{subfig:exs_b} where $U=\{u_1,u_2,u_3\}$, $V=\{v_1,v_2\}$, 
$E=\{u_1v_1, u_2v_1, u_2v_2, u_3v_2\}$,
$k_L=3$, and $k_F=1$.
Also, $p_{u_1v_1}=p_{F,u_2v_1}=p_{F,u_2v_2}=p_{u_3v_2}=1$
and $p_{F,u_1v_1}=p_{u_2v_1}=p_{u_2v_2}=p_{F,u_3v_2}=0$.
\end{fullpaper}
\begin{conference}
Consider an instance depicted in Figure~\ref{subfig:exs_b} with $k_L=3$ and $k_F=1$.
\end{conference}
Then $\fBR(\chi_{U})=0$ while $\fBR(\chi_{\{u_1\}})=\fBR(\chi_{\{u_3\}})=1$.
\end{example}

\begin{figure}[htbp]
\vskip -3mm
\begin{minipage}[b]{.5\columnwidth}
\centering
\subfigure[An instance without leader's pure optimal strategies]{
\scalebox{1}{
\begin{tikzpicture}[xscale=.6,yscale=.5, domain=0:10, cn/.style={circle,fill=white,draw=black,inner sep=2pt}]
\draw[draw=white] (-1.3,3) -- (7.3,3);
\node[cn] at (1,7) (u1) {};
\node[cn] at (1,5) (u2) {};
\node[cn] at (1,3) (u3) {};
\node[cn] at (5,8) (v1) {};
\node[cn] at (5,6) (v2) {};
\node[cn] at (5,4) (v3) {};
\node[cn] at (5,3) (v4) {};

\draw[thick] (u1) -- (v1) node[above,pos=.3,sloped,yshift=-2pt] {$(\textcolor{red}{0.1},\,\textcolor{blue}{0})$};
\draw[thick] (u1) -- (v2) node[above,pos=.7,sloped,yshift=-2pt] {$(\textcolor{red}{1},\,\textcolor{blue}{0.5})$};
\draw[thick] (u2) -- (v2) node[below,pos=.7,sloped,yshift=2pt] {$(\textcolor{red}{1},\,\textcolor{blue}{0.5})$};
\draw[thick] (u2) -- (v3) node[below,pos=.3,sloped,yshift=2pt] {$(\textcolor{red}{0.1},\,\textcolor{blue}{0})$};
\draw[thick] (u3) -- (v4) node[below,pos=.4,sloped,yshift=2pt] {$(\textcolor{red}{0.599},\,\textcolor{blue}{0})$};

\draw (u1) node[left,xshift=-2pt] {$u_1$};
\draw (u2) node[left,xshift=-2pt] {$u_2$};
\draw (u3) node[left,xshift=-2pt] {$u_3$};

\draw (v1) node[right,xshift=2pt] {$v_1$};
\draw (v2) node[right,xshift=2pt] {$v_2$};
\draw (v3) node[right,xshift=2pt] {$v_3$};
\draw (v4) node[right,xshift=2pt] {$v_4$};

\end{tikzpicture}}\label{subfig:exs_a}}
\end{minipage}%
\begin{minipage}[b]{.5\columnwidth}
\centering
\subfigure[An instance where the leader should not spend whole her budget]{
\scalebox{1}{
\begin{tikzpicture}[xscale=.6,yscale=.5, domain=0:10, cn/.style={circle,fill=white,draw=black,inner sep=2pt}]
\draw[draw=white] (2.5,3) -- (9.5,3);

\node[cn] at (4,7.5) (u1) {};
\node[cn] at (4,5) (u2) {};
\node[cn] at (4,2.5) (u3) {};
\node[cn] at (8,6) (v1) {};
\node[cn] at (8,4) (v2) {};

\draw[thick] (u1) -- (v1) node[below,pos=.3,sloped,yshift=2pt] {$(\textcolor{red}{1},\,\textcolor{blue}{0})$};
\draw[thick] (u2) -- (v1) node[below,pos=.7,sloped,yshift=2pt] {$(\textcolor{red}{0},\,\textcolor{blue}{1})$};
\draw[thick] (u2) -- (v2) node[below,pos=.3,sloped,yshift=2pt] {$(\textcolor{red}{0},\,\textcolor{blue}{1})$};
\draw[thick] (u3) -- (v2) node[below,pos=.7,sloped,yshift=2pt] {$(\textcolor{red}{1},\,\textcolor{blue}{0})$};

\draw (u1) node[left,xshift=-2pt] {$u_1$};
\draw (u2) node[left,xshift=-2pt] {$u_2$};
\draw (u3) node[left,xshift=-2pt] {$u_3$};

\draw (v1) node[right,xshift=2pt] {$v_1$};
\draw (v2) node[right,xshift=2pt] {$v_2$};

\end{tikzpicture}}\label{subfig:exs_b}}
\end{minipage}
\vskip -3mm
\caption{Examples of the Stackelberg budget allocation game, where the pair of numbers on each edge $uv$ represents the activation probabilities $p_{uv}$ and $p_{F,uv}$.}\label{fig:exs}
\end{figure}
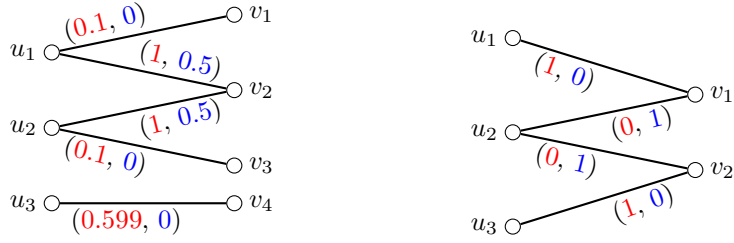

\begin{conference}
There also exists an instance without a pure Stackelberg equilibrium (see Example~4.4 in the upcoming full version).
\end{conference}
\begin{fullpaper}
There also exists an instance without a pure Stackelberg equilibrium. 
\begin{example}\label{ex:bilinear}
Consider an instance with $U=\{u_1,u_2,u_3,u_4\}$
and $V=V_1\cup V_2\cup V_3\cup V_4$
where $V_1=\{v_{1,1},\dots,v_{1,10}\}$ and $V_i=\{v_{i,1},\dots,v_{i,6}\}$ $(i=2,3,4)$.
Let $E=\{uv\mid u=u_i, v\in V_i, i=1,2,3,4\}$
and let $p_{uv}=1$ and $p_{F,uv}=0.5$ for all $uv\in E$.
We have $|N_v| = 1$ for all $v \in V$.
There exists a mixed Stackelberg equilibrium $(x,y)$, where 
$x=\frac{1}{3}(\chi_{\{u_1,u_4\}}+\chi_{\{u_1,u_2,u_4\}}+\chi_{\{u_1,u_3,u_4\}})$ and $y=\chi_{\{u_2,u_3\}}$.
However, there is no pure Stackelberg equlibrium in this instance.
\end{example}
\end{fullpaper}

\subsection{Hardness}
In this subsection, we show hardness results. 
We observe that finding a leader's optimal pure strategy when $k_F=0$ is equivalent to the optimal budget allocation problem.
Thus, it is NP-hard to find the leader's mixed strategy that forms a Stackelberg equilibrium even if $k_F=0$, since our problem~\eqref{eq:problem} when $k_F=0$ always has the leader's optimal strategy that is pure.
It is also known that the approximation ratio $1-1/e$ is best possible for the maximum coverage problem under the assumption that P $\neq$ NP~\cite{Feige1998}. 
Hence, our problem \eqref{eq:problem} is also inapproximable within ratio $1-1/e$ unless P $=$ NP. 

Moreover, when $k_F$ is not a fixed constant, it is even NP-hard to evaluate $\fBR(x)$ for a given $x \in D_L$. 
The proof is reducing from the maximum coverage problem, which is shown to be NP-hard (see e.g.,~\cite{Hochbaum96}).
Given an integer $k$ and a collection of sets $\mathcal{S}=\{S_{1},S_{2},\ldots ,S_{n}\}$, the \emph{maximum coverage problem} is to find a subset $\mathcal{S}'\subseteq \mathcal{S}$ of at most $k$ sets such that the number of covered elements $\left|\bigcup _{S_{i}\in \mathcal{S}'}{S_{i}}\right|$ is maximized.
\begin{conference}
See the upcoming full version for the proof.
\end{conference}

\begin{theorem} \label{thm:f_maxcov}
It is NP-hard to compute $\fBR(x)$ for $x \in D_L$. 
\end{theorem}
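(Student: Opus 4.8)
The plan is to reduce from the maximum coverage problem, whose NP-hardness is recalled just above the statement. The core idea is to design a game instance together with a fixed leader's strategy $x$ so that, for that $x$, the follower's best-response problem $\max_{y\in D_F} g(x,y)$ becomes exactly an instance of maximum coverage, and so that the leader's utility $f(x,y)$ is an affine function of $g(x,y)$; then the single number $\fBR(x)$ encodes the optimal coverage value.

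Given a maximum coverage instance with sets $\mathcal{S}=\{S_1,\dots,S_n\}$ over a ground set $\{e_1,\dots,e_m\}$ and a budget $k$, I would build a bipartite graph $G=(U,V;E)$ with $V=\{v_1,\dots,v_m\}$ representing the elements and $U=\{u_0,u_1,\dots,u_n\}$, where $u_1,\dots,u_n$ represent the sets and $u_0$ is an auxiliary medium joined to every customer. The edges $u_i v_j$ (for $i\ge 1$) encode the membership $e_j\in S_i$, and I would set $p_{F,u_i v_j}=1$ on these edges and $p_{u_0 v_j}=1$, $p_{F,u_0 v_j}=0$ on the edges incident to $u_0$ (all remaining probabilities are irrelevant). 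Finally I set $k_L=1$, $k_F=k$, and let the leader play the pure strategy $x=\chi_{\{u_0\}}$, so that $P_v(x)=1$ for every $v\in V$.

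With this choice the two utilities collapse neatly. Since $P_v(x)=1$, we have
\[
g(x,y)=\sum_{v\in V}P_{F,v}(y), \qquad f(x,y)=\sum_{v\in V}\bigl(1-P_{F,v}(y)\bigr)=m-g(x,y).
\]
Because $p_{F,u_i v_j}=1$ whenever $e_j\in S_i$, the term $P_{F,v_j}(y)$ equals $1$ exactly when some chosen medium $u_i$ with $i\ge 1$ covers $e_j$, while spending budget on $u_0$ is useless for the follower as $p_{F,u_0 v}=0$. Hence $g(x,y)$ equals the number of elements covered by the sets selected in $y$, and $\max_{y\in D_F} g(x,y)$ equals the optimal coverage value $g^*$. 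As $f(x,y)=m-g(x,y)$ is constant and equal to $m-g^*$ on the whole best-response set $\BR(x)$, we obtain $\fBR(x)=m-g^*$. Thus an algorithm computing $\fBR(x)$ would recover $g^*=m-\fBR(x)$ and solve maximum coverage, establishing the claimed NP-hardness.

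I expect the only delicate points to be the following. First, I must guarantee that the follower gains nothing by wasting budget on the auxiliary medium $u_0$, which is precisely why I force $p_{F,u_0 v}=0$. Second, there is a tie-breaking subtlety hidden in $\fBR$: a priori it maximizes $f$ over \emph{all} best responses, but in this construction $f$ is constant on $\BR(x)$, so no tie-breaking arises and the reduction reads off a single well-defined value. The auxiliary medium is exactly what lets the leader force $P_v(x)=1$ within the budget $k_L=1$, decoupling the leader's side from the combinatorics so that all the hardness sits in the follower's coverage problem.
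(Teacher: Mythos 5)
Your proof is correct and takes essentially the same approach as the paper: fix a leader strategy under which $P_v(x)=1$ for every customer, so that $f(x,y)=|V|-g(x,y)$, hence $\fBR(x)=|V|-\max_{y\in D_F}g(x,y)$, and the follower's best-response problem is exactly maximum coverage. The only cosmetic difference is that you force $P_v(x)=1$ via an auxiliary medium $u_0$ with $k_L=1$, whereas the paper sets $p_{uv}=p_{F,uv}=1$ on all membership edges and has the leader allocate to all $n$ media with $k_L=n$.
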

\begin{fullpaper}
\begin{proof}
Let $(k, \mathcal{S})$ be any instance of the maximum coverage problem. 
We consider an instance of Stackelberg budget allocation problem: $U=\{S_1,\dots,S_n\}$, $V=\bigcup_{i=1}^n S_i$, $E=\{S_i v \mid v\in S_i,~i=1,\dots,n\}$,
and $p_{uv}=p_{F,uv}=1$ for each $uv\in E$.
Let $k_L=n$ and $k_F=k$.

We fix a leader's mixed strategy as $x = \chi_{U}$. 
We denote $z$ be an all-one-vector. 
To evaluate $\fBR(x)$, it is necessary to know $\max_{y \in D_F}g(x, y)$
since $\fBR(x)=|V|-\max_{y\in D_F}g(x,y)$. 

We show that there exists $\mathcal{S}'$ that covers at least $\alpha$ elements if and only if $g(x,y) \geq \alpha$ for some $y \in D_F$.
By construction, we have $g(x,y)=\sum_{v \in V}P_{F, v}(y)$. 
In addition, $P_{F,v}(y) = 1$ if $v$ is covered by some set $S$ with $y_{S} = 1$, and $P_{F,v}(y) = 0$ otherwise. 
Thus, $g(x, y) = \left|\bigcup _{S \in \supp(y)}{S}\right|$. 

Then, if $\mathcal{S}' \subseteq \mathcal{S}$ that covers at least $\alpha$ elements, then $y\in D_F$ attains $g(x,y) \geq \alpha$, where $y$ is defined by $y_{S} = 1$ if $S \in \mathcal{S}'$ and $y_{S}=0$ otherwise. 
Conversely, if $y$ satisfies $g(x,y) \geq \alpha$, then $\supp(y)$ covers at least $\alpha$ elements. 
This completes the proof. 
\end{proof}
\end{fullpaper}

\section{Algorithms for non-disjoint customers} \label{sec:alg}

In this section, for the non-disjoint customers setting, which has no assumption about the graph structure, we propose two types of algorithms for \eqref{eq:problem}. 
Let $\mathcal{G}$ be a game instance created from an instance  $\phi=(G= (U,V;E), \{p_{uv}\}_{uv \in E}, \{p_{F,uv}\}_{uv \in E},\allowbreak k_L, k_F)$ and let $\Lambda$ be its data size. 
Due to the hardness result~(Theorem \ref{thm:f_maxcov}), 
in this section we assume that $k_F$ is a constant.

\subsection{Approximation algorithm via zero-sum game}\label{subsec:approx}

We shall approximately solve a game $\mathcal{G}$ by solving a zero-sum game close to $\mathcal{G}$.
The core idea of constructing such a zero-sum game is to keep the same set of best-responses of the follower for any strategy of the leader as $\mathcal{G}$. 
Let us focus on the structure of $f$ and $g$, which include the term $-\sum_{v\in V}P_v(x)P_{F,v}(y)$ and its negation, respectively.
We define a utility function for the leader as 
\begin{align*}
\textstyle
    \Phi(x,y) 
    &\textstyle= -g(x,y)+\sum_{v \in V}P_v(x) \\ 
    &\textstyle= \sum_{v\in V}[P_v(x)(1-P_{F,v}(y))-P_v(y)(1-P_v(x))].
\end{align*}
Note that $C\coloneqq \max_{y\in D_F}\sum_{v\in V}P_v(y)\ge -\Phi(x,y)$
and we can compute $C$ in polynomial time since $|D_F|$ is polynomially bounded.
Let $\mathcal{G}_{\Phi}$ be a zero-sum game $(D_L,D_F,\Phi,\allowbreak-\Phi)$.

For reals $\alpha \in [0,1]$ and $\epsilon \ge 0$, we call an algorithm \emph{$(\alpha,\epsilon)$-approximation} for 
$\mathcal{G}$ (resp.\ $\mathcal{G}_{\Phi}$) if it provides a strategy profile $(x',y')$ such that $y'\in \BR(x')$ and 
$f(x',y') \ge \alpha\cdot \max_{x\in D_L}\fBR(x)-\epsilon$ (resp.\ $\Phi(x',y') \ge \alpha\cdot \max_{x\in D_L, y\in\BR(x)}\Phi(x,y)-\epsilon$). 
Such $(x',y')$ is called an \emph{$(\alpha,\epsilon)$-approximate solution}.

\begin{lemma}\label{lem:zero}
Let $(x',y')$ be an $(\alpha,\epsilon)$-approximate solution of a zero-sum game $\mathcal{G}_{\Phi}$, 
and let $(x^*,y^*)$ be a strong Stackelberg equilibrium of the original game $\mathcal{G}$. 
Let $\epsilon_1 \coloneqq \sum_{v \in V}(1-P_v(x'))P_v(y')$ and $\epsilon_2 \coloneqq \sum_{v \in V}(1-P_v(x^*))P_v(y^*)$.
Then $(x',y')$ is an $(\alpha,\alpha\epsilon_2-\epsilon_1+\epsilon)$-approximate solution for the game $\mathcal{G}$.
\end{lemma}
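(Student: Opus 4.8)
The plan is to reduce the claim to an algebraic identity between $f$ and $\Phi$ together with the observation that the follower's best-response correspondence is identical in $\mathcal{G}$ and $\mathcal{G}_\Phi$. First I would subtract the definition of $\Phi$ from that of $f$: the common term $\sum_{v\in V}P_v(x)(1-P_{F,v}(y))$ cancels, leaving $f(x,y)=\Phi(x,y)+\sum_{v\in V}(1-P_v(x))P_v(y)$. Writing $\delta(x,y)=\sum_{v\in V}(1-P_v(x))P_v(y)\ge 0$, this is $f=\Phi+\delta$, and by definition $\epsilon_1=\delta(x',y')$ and $\epsilon_2=\delta(x^*,y^*)$. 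Next I would check that $\BR$ coincides in both games: since $-\Phi(x,y)=g(x,y)-\sum_{v\in V}P_v(x)$ and the subtracted term does not depend on $y$, maximizing the follower's payoff $-\Phi(x,\cdot)$ is equivalent to maximizing $g(x,\cdot)$, so the two correspondences agree. This has two consequences --- the condition $y'\in\BR(x')$ carried by the $\mathcal{G}_\Phi$-approximate solution is already the feasibility condition for $\mathcal{G}$, and the equilibrium pair $(x^*,y^*)$ is feasible for the maximization defining $\mathrm{OPT}_\Phi \coloneqq \max_{x\in D_L,\,y\in\BR(x)}\Phi(x,y)$.

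Then I would chain the bounds. Feasibility of $(x^*,y^*)$ gives $\mathrm{OPT}_\Phi\ge\Phi(x^*,y^*)=f(x^*,y^*)-\epsilon_2=\max_{x}\fBR(x)-\epsilon_2$, where I use that the strong Stackelberg equilibrium attains $\fBR(x^*)=\max_x\fBR(x)$. Combining the approximation guarantee $\Phi(x',y')\ge\alpha\,\mathrm{OPT}_\Phi-\epsilon$ with the identity $f(x',y')=\Phi(x',y')+\epsilon_1$ yields $f(x',y')\ge\alpha\bigl(\max_x\fBR(x)-\epsilon_2\bigr)-\epsilon+\epsilon_1=\alpha\max_x\fBR(x)-(\alpha\epsilon_2-\epsilon_1+\epsilon)$. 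Together with $y'\in\BR(x')$ this is exactly the asserted $(\alpha,\alpha\epsilon_2-\epsilon_1+\epsilon)$-approximation for $\mathcal{G}$.

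I expect the only conceptual step to be the best-response invariance, which makes both the feasibility of $(x^*,y^*)$ for the $\Phi$-problem and the inheritance of $y'\in\BR(x')$ legitimate; the rest is bookkeeping. The one point to watch is that $\alpha$ multiplies $\mathrm{OPT}_\Phi$ rather than $\max_x\fBR(x)$, so the $-\epsilon_2$ slack picks up the factor $\alpha$ while the $+\epsilon_1$ gain does not. This asymmetry is precisely what produces the non-symmetric additive error $\alpha\epsilon_2-\epsilon_1+\epsilon$.
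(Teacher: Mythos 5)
Your proof is correct and takes essentially the same route as the paper's: the identity $f(x,y)=\Phi(x,y)+\sum_{v\in V}(1-P_v(x))P_v(y)$, the approximation guarantee for $\mathcal{G}_{\Phi}$, and the bound $\max_{x\in D_L,\,y\in\BR(x)}\Phi(x,y)\ge\Phi(x^*,y^*)$ chained in the same order. The only difference is that you make the best-response invariance between $\mathcal{G}$ and $\mathcal{G}_{\Phi}$ explicit, which the paper states in the discussion preceding the lemma rather than inside the proof itself.
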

\begin{proof}
We remark that $f(x,y)$ can be rewritten by $\Phi(x,y)$ as $f(x,y) = \Phi(x,y)+\sum_{v\in V}(1-P_v(x))P_{v}(y)$.
Let $(\tilde{x},\tilde{y})$ be the minimax strategy of $\mathcal{G}_{\Phi}$.
We have 
\begin{align*}
    f(x',y') & = \Phi(x',y') + \epsilon_1
        \ge \alpha \Phi(\tilde{x},\tilde{y}) - \epsilon + \epsilon_1\\
        & \ge \alpha \Phi(x^*,y^*) - \epsilon + \epsilon_1 
         = \alpha f(x^*,y^*) - (\alpha \epsilon_2 - \epsilon_1 + \epsilon),
\end{align*}
where the second inequality holds by $\Phi(\tilde{x},\tilde{y}) =\max_{x\in D_L, y \in \BR(x)} \Phi(x, y) \ge \Phi(x^*,y^*)$.
\end{proof}

To find an approximate strong Stackelberg equilibrium, it suffices to find an approximate minimax strategy for $\mathcal{G}_{\Phi}$.
Note that since 
$|S_L|$ is an exponential size, finding a minimax strategy for $\mathcal{G}_{\Phi}$ is still intractable.

To this end, we use the \emph{multiplicative weight update method}~\cite{AroraHK12}.
Based on this method, Kawase and Sumita~\cite{KawaseS19} showed that,
for any nonnegative monotone submodular functions $h_1,\dots,h_\nu\colon \{0,1\}^{n}\to\mathbb{R}_+$
and $\epsilon>0$,
there exists an algorithm 
that finds a $(1-1/e-\epsilon)$-approximate solution of $\max_{x\in D_L}\min_{i\in[\nu]}\E_{s\sim x}[h_i(s)]$ in polynomial time in $n$, $\nu$ and $1/\epsilon$.
We set $h_y(z) = \Phi(z,y)+C$ for all pure strategies $z \in S_L$ and $y \in D_F$. 
By the definition, $h_y$ is nonnegative monotone submodular for any $y \in D_F$. 
Thus, we see that we can compute a $(1-1/e-\epsilon)$-approximate solution for $\max_{x\in D_L}\min_{y\in D_F}(\Phi(x,y)+C)$ in polynomial time in $\Lambda$ and $1/\epsilon$. 
This solution is $\bigl(1-1/e-\epsilon,(1/e+\epsilon)C\bigr)$-approximate for $\mathcal{G}_{\Phi}$.
Therefore, by Lemma \ref{lem:zero}, we observe the following result.

\begin{theorem}
For any $\epsilon>0$,
there exists a $(1-1/e-\epsilon,\beta)$-approximation algorithm
where $\beta=(1-1/e)\epsilon_2-\epsilon_1+(1/e+\epsilon)C$
and the running time is polynomial with respect to $\Lambda$ and $1/\epsilon$.
\end{theorem}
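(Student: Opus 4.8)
The plan is to treat the theorem as a corollary obtained by chaining the maximin submodular optimization guarantee with Lemma~\ref{lem:zero}. Concretely, I would (i) reduce the task of approximating a strong Stackelberg equilibrium of $\mathcal{G}$ to approximately solving the auxiliary zero-sum game $\mathcal{G}_\Phi$, (ii) invoke the algorithm of Kawase and Sumita~\cite{KawaseS19} on the shifted functions $h_y(z)=\Phi(z,y)+C$ to obtain an approximate maximin strategy, (iii) undo the shift by $C$ to read off a $(1-1/e-\epsilon,(1/e+\epsilon)C)$-approximate solution of $\mathcal{G}_\Phi$, and (iv) push this through Lemma~\ref{lem:zero} to land in $\mathcal{G}$.

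The first substantive step is to check that the hypotheses of the Kawase--Sumita result hold. For each fixed $y\in D_F$, writing $\Phi(z,y)=\sum_{v\in V}\bigl[(1-P_{F,v}(y))+P_v(y)\bigr]P_v(z)-\sum_{v\in V}P_v(y)$ exhibits $\Phi(\cdot,y)$ as a nonnegative linear combination of the monotone submodular functions $P_v$ plus a constant that does not depend on $z$; hence $\Phi(\cdot,y)$, and therefore $h_y=\Phi(\cdot,y)+C$, is monotone submodular, while the choice $C\ge-\Phi(x,y)$ makes $h_y$ nonnegative. Because $k_F$ is constant, $\nu:=|D_F|$ is polynomial in $\Lambda$, so applying the algorithm to the family $\{h_y\}_{y\in D_F}$ yields, in time polynomial in $\Lambda$ and $1/\epsilon$, a strategy $x'$ with $\min_{y}(\Phi(x',y)+C)\ge(1-1/e-\epsilon)\max_{x}\min_{y}(\Phi(x,y)+C)$; I would then set $y'$ to a minimizer of $\Phi(x',\cdot)$ over $D_F$, computable by enumeration.

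Next I would translate the guarantee back. Subtracting the constant $C$ from both sides of the maximin inequality turns it into $\Phi(x',y')\ge(1-1/e-\epsilon)\,\mathrm{val}-(1/e+\epsilon)C$, where $\mathrm{val}=\max_x\min_y\Phi(x,y)$ is the value of $\mathcal{G}_\Phi$; since the follower's best responses in $\mathcal{G}_\Phi$ and $\mathcal{G}$ coincide (the two utilities differ only by the $y$-independent term $\sum_{v\in V}P_v(x)$), we have $y'\in\BR(x')$ and $(x',y')$ is a $(1-1/e-\epsilon,(1/e+\epsilon)C)$-approximate solution of $\mathcal{G}_\Phi$ in the sense of the paper's definition. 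Feeding $\alpha=1-1/e-\epsilon$ and additive error $(1/e+\epsilon)C$ into Lemma~\ref{lem:zero} produces the additive term $(1-1/e-\epsilon)\epsilon_2-\epsilon_1+(1/e+\epsilon)C$; since $\epsilon_2\ge0$ this is at most the stated $\beta=(1-1/e)\epsilon_2-\epsilon_1+(1/e+\epsilon)C$, so the claimed (slightly weaker) bound follows a fortiori, and the running time is inherited from the Kawase--Sumita step together with the polynomial-size enumeration of $D_F$.

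I expect the only real friction to be the additive-error bookkeeping---tracking how the constant shift $C$ interacts with the multiplicative factor $1-1/e-\epsilon$ and then composing cleanly with Lemma~\ref{lem:zero}---rather than any deep argument, since the submodular-optimization heavy lifting is entirely delegated to \cite{KawaseS19}. The one place warranting genuine care is the monotonicity of $\Phi(\cdot,y)$ despite the negative $-P_v(y)(1-P_v(x))$ contribution, which is resolved by the regrouping above showing that the net coefficient of each $P_v(z)$ is nonnegative.
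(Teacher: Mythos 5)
Your proposal is correct and follows essentially the same route as the paper: construct $\mathcal{G}_\Phi$, apply the Kawase--Sumita maximin algorithm to the shifted functions $h_y(z)=\Phi(z,y)+C$, translate the guarantee into a $(1-1/e-\epsilon,(1/e+\epsilon)C)$-approximate solution of $\mathcal{G}_\Phi$, and compose with Lemma~\ref{lem:zero}. In fact, you make explicit several points the paper leaves implicit---the regrouping showing each $P_v(z)$ has nonnegative coefficient $(1-P_{F,v}(y))+P_v(y)$ (hence $h_y$ is monotone submodular), the coincidence of best responses in $\mathcal{G}$ and $\mathcal{G}_\Phi$, and the observation that $(1-1/e-\epsilon)\epsilon_2\le(1-1/e)\epsilon_2$ justifies the stated $\beta$---so your write-up is, if anything, more complete than the paper's.
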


\subsection{Heuristic algorithm}
In this subsection, we propose a heuristic algorithm. 
Intuitively, in the algorithm, the players fictitiously play a game $\ell$ times.
Here $\ell$ is a parameter. 
Let us assume that the leader would know that the follower estimates the leader's mixed strategy by observing the past budget allocations.
In every phase, the leader needs to allocate her budgets so that the mixed strategy estimated by the follower maximizes the leader's utility.
The algorithm outputs a mixed strategy by repeating this phase $\ell$ times. 

We describe informally our algorithm, which is summarized in Algorithm \ref{alg:heu}. 
The algorithm repeatedly computes $\ell$ pure strategies $\chi_{S_1}, \ldots, \chi_{S_\ell} \in D_L$, and outputs the best mixed strategy among $\frac{1}{i}(\chi_{S_1}+\cdots + \chi_{S_i})$ $(i=1,\dots,\ell)$. 
At first round, $\chi_{S_1}$ is chosen to maximize $\fBR(x)$. 
Each $\chi_{S_i}$ is computed greedily (lines 4--9). 

\begin{algorithm}[t]
\caption{Proposed heuristic}\label{alg:heu}
\SetKwInOut{Input}{input}\Input{a parameter $\ell\in\mathbb{Z}_+$\quad\textbf{output:} a mixed strategy}
$x,x^*\leftarrow \chi_{\emptyset}$\;
\For{$i=1$ to $\ell$}{
    $S \leftarrow \emptyset$\;
    \For{$j = 1$ to $k_L$}{
        $x^u\leftarrow \frac{i-1}{i}x+\frac{1}{i}\chi_{S\cup\{u\}}$ $(u\in U)$\;
        $r \in \argmax\nolimits_{u \in U\setminus S} \fBR(x^u)$\;
        \lIf{$\fBR(x^r)\ge \fBR(x)$}{
            $S \leftarrow S\cup\{r\}$
        }
        \lElse{break}
    }
    $x \leftarrow \frac{i-1}{i}x+\frac{1}{i}\chi_{S}$\;
    \lIf{$\fBR(x^*)<\fBR(x)$}{
        $x^*\leftarrow x$
    }
}
\Return $x^*$\;
\end{algorithm}

In each round $i$, we evaluate $\fBR$ ${\rm O}(n\cdot k_L)$ times, and each evaluation of $\fBR$ takes ${\rm O}(|D_F|\cdot |E|\cdot i)$ time. 
Thus the total running time is ${\rm O}(|D_F|\cdot |E|\cdot n\cdot k_L\cdot \ell^2)$.

\section{Algorithm for disjoint customers} \label{sec:ana}
In this section, we focus on the disjoint customers setting where each customer is interested in only one medium, i.e., $|N_v| = 1$ for all $v \in V$. 
This means that the utility functions $f,g$ are bilinear.
In this special case, we propose an LP-based algorithm, and modify it so that it runs fast when $|D_F|$ is small.
We denote by $\Lambda$ the data size of an input game instance $(G, \{p_{uv}\}_{uv \in E}, \{p_{F,uv}\}_{uv \in E}, k_L, k_F)$. 
The following proposition is the main result in this section. 
\begin{proposition}\label{prop:bilinear}
When $|N_v| = 1$ for all $v \in V$, we can find a strong Stackelberg equilibrium $(x, y)$ in polynomial time with respect to $|D_F|$ and $\Lambda$. 
\end{proposition}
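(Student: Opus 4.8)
The plan is to exploit the bilinearity of the disjoint case to collapse the exponentially many pure strategies of the leader into a polynomial-dimensional \emph{marginal polytope}, and then run a Conitzer--Sandholm-style multiple-LP scheme over the follower's pure strategies. First I would record the structural consequence of $|N_v|=1$. Writing $N_v=\{u_v\}$, we have $P_v(z)=p_{u_v v}\,z_{u_v}$, $P_{F,v}(y)=p_{F,u_v v}\,y_{u_v}$, and $P_v(y)=p_{u_v v}\,y_{u_v}$, so both $f$ and $g$ are bilinear in $(z,y)$. Consequently, for a leader's mixed strategy $x\in D_L$, the utilities depend on $x$ only through the marginal vector $q\in[0,1]^U$ defined by $q_u=\Pr_{z\sim x}[z_u=1]=\sum_{S\ni u}x_S$: indeed $P_v(x)=p_{u_v v}\,q_{u_v}$, and hence $f(x,y)$ and $g(x,y)$ are linear in $q$ for each fixed $y\in D_F$. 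This is the aggregation step.

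Second I would characterize the set of attainable marginals, claiming it is exactly the polytope $Q=\{q\in[0,1]^U\mid \sum_{u\in U}q_u\le k_L\}$. The inclusion ``marginals $\subseteq Q$'' is immediate since every $\chi_s$ with $s\in S_L$ lies in $Q$ and $Q$ is convex. For the reverse inclusion I would invoke the vertex structure of this truncated-cube (uniform-matroid) polytope: because $k_L\in\mathbb{Z}_+$, all its extreme points are $0/1$ vectors with at most $k_L$ ones, i.e.\ the characteristic vectors $\chi_s$ of pure strategies $s\in S_L$. Hence any $q\in Q$ is a convex combination of such $\chi_s$, and by Carathéodory at most $n+1$ of them suffice; this convex combination is a mixed strategy $x\in D_L$ with marginal $q$, computable in polynomial time by greedily peeling off characteristic vectors. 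This is the \emph{recovery} step, and it loses no leader utility precisely because the utilities depend on $x$ only through $q$.

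Third, with the problem now posed over $q\in Q$, I would apply the multiple-LP method. For each candidate follower response $y\in D_F$, solve
\[
\max_{q\in Q}\ f(q,y)\quad \text{subject to}\quad g(q,y)\ge g(q,y')\ \ (\forall y'\in D_F),
\]
where every constraint and the objective are linear in $q$ by the bilinearity above; this LP has $n$ variables and $\mathrm{O}(|D_F|)$ constraints, hence is solvable in time polynomial in $\Lambda$. The constraints guarantee $y\in\BR(q)$, so ranging over all $y\in D_F$ and returning the pair $(q^\star,y^\star)$ with the largest objective computes $\max_{q}\max_{y\in\BR(q)}f(q,y)=\max_{x}\fBR(x)$, i.e.\ the strong Stackelberg equilibrium value, with the outer maximization over $y$ implementing the leader-favorable tie-breaking of Definition~\ref{def:se}. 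Finally I would recover a mixed strategy $x^\star$ from $q^\star$ via the decomposition above. Solving $|D_F|$ such LPs yields the claimed running time polynomial in $|D_F|$ and $\Lambda$.

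The main obstacle I expect is establishing that the aggregation is lossless in both directions: that $Q$ is \emph{exactly} the realizable marginal set (the reverse inclusion, resting on the integrality of the polytope's vertices) and that the recovered $x^\star$ is a genuine mixed strategy achieving the LP optimum with no loss of utility. The multiple-LP scheme and the correctness of its tie-breaking are standard once bilinearity reduces each sub-problem to a linear program; the delicate points are the exact marginal characterization and the polynomial-time decomposition underlying the recovery step.
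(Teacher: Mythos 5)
Your proposal is correct and takes essentially the same route as the paper's proof: aggregate the leader's mixed strategy to its marginal fractional budget allocation, characterize the attainable marginals as exactly the polytope $\{q\in[0,1]^U \mid \sum_{u\in U} q_u\le k_L\}$ (the paper's Lemma~\ref{lemma:z in Q}), solve one linear program over this polytope for each follower pure strategy $y^*\in D_F$ (the paper's LP~\eqref{eq:LP_linear}), and recover a mixed strategy of support at most $n+1$ via Carath\'eodory (the paper's Lemma~\ref{lemma:recover}). The only minor difference is in justifying the recovery step, where you argue directly from the integrality of the vertices of the uniform-matroid polytope and a greedy peeling procedure, while the paper invokes the Gr\"otschel--Lov\'asz--Schrijver theorem with the trivial separation oracle given by the $2n+1$ inequalities; both are valid.
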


As we will see in Section~\ref{alg:mlp}, it is easy to compute a strong Stackelberg equilibrium by a multiple LP formulation. 
The running time is polynomial with respect to $\lambda$, $|S_L|$, and $|D_F|$. 
However, this is not sufficient since $|S_L|$ could be exponentially large with respect to $\lambda$ and $|D_F|$.
To remove the dependency on $|S_L|$, we reduce the size of each LP in Section~\ref{sec:agg}. 
The idea is a projection of a leader's mixed strategy $x \in [0,1]^{S_L}$ onto a fractional budget allocation $r \in [0,1]^U$.

\subsection{Multiple LP formulation} \label{alg:mlp}
We first describe a simple exact algorithm to solve \eqref{eq:problem}. 
The problem \eqref{eq:problem} is rewritten as 
\begin{align}\label{eq:problem alt}
\begin{array}{rl}
\max &  f(x, y) \\
{\rm s.t.} & x \in D_L, \\
&g(x, y) \geq g(x, y') \quad \forall y' \in D_F. 
\end{array}
\end{align}
 When we fix $y=y^*$, LP~\eqref{eq:problem alt} is equivalent to the following LP:
\begin{align}
\!\!\!\!\!\!
\begin{array}{rl}
\max&\sum_{z\in S_L} f(z,y^*)x_z\\
\text{s.t.}
&\sum_{z \in S_L}(g(z,y^*)-g(z,y')) x_z \geq 0 \ \ \  \forall y'\in D_F, \\
&\sum_{z\in S_L} x_z=1,\\
&x_z\ge 0 \quad \forall z \in S_L.
\end{array}\label{eq:LP_nonlinear}
\end{align}
The simple algorithm solves \eqref{eq:problem alt} exactly by solving \eqref{eq:LP_nonlinear} for each $y^* \in D_F$. 
Each LP \eqref{eq:LP_nonlinear} is solvable in polynomial time with respect to $\Lambda$, $|S_L|$ and $|D_F|$, and the algorithm produces $|D_F|$ instances of LP \eqref{eq:LP_nonlinear}.
Thus this algorithm runs in polynomial time with respect to $\Lambda$, $|S_L|$, and $|D_F|$.

\subsection{Reduced formulation}\label{sec:agg}
Let $A$ be a matrix in $\{0,1\}^{U \times S_L}$ whose rows are all pure strategies. 
For notational convenience, we denote $p'_{uv}=p_{uv} - p_{F,uv}$ for each $uv \in E$. 
We denote by a fractional budget allocation $r \in [0,1]^U$ with $\sum_{u \in U} r_u \leq k_L$.
We remark that a fractional budget allocation is a different notion from a mixed strategy $x \in D_L$; the former is uniquely defined from the latter as $r_u = \sum_{S: u\in S} x_S \ (u\in U)$, but the converse may not hold.

We first observe that $A$ projects a mixed strategy $x$ to a fractional budget allocation $Ax \in [0,1]^U$.
Let $Q=\{r \in[0,1]^U \mid r=Ax, x \in D_L\}$.
%
%
\begin{lemma}\label{lemma:z in Q}
For any vector $z$, it holds that $z \in Q$ if and only if 
\begin{align}\label{eq:matroid poly_uniform}
\textstyle
0 \leq z \leq 1, \ \sum_{u \in U} z_u \leq k_L.
\end{align}
\end{lemma}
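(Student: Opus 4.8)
The plan is to read both sides of the equivalence as descriptions of the same object, the uniform matroid polytope, and prove they coincide. Since $D_L$ is the probability simplex over $S_L$ and the column of $A$ indexed by $S$ is the characteristic vector $z_S\in\{0,1\}^U$, every element of $Q$ is a convex combination $Ax=\sum_{S\in S_L}x_S\,z_S$ of characteristic vectors of feasible pure strategies, i.e.\ of sets $S$ with $|S|\le k_L$. Hence $Q=\operatorname{conv}\{z_S:|S|\le k_L\}$. Writing $P$ for the polytope cut out by \eqref{eq:matroid poly_uniform}, the lemma amounts to the equality $Q=P$. The inclusion $Q\subseteq P$ is immediate: for $z=\sum_S x_S z_S$ with $x\in D_L$, each coordinate $z_u=\sum_{S\ni u}x_S$ lies in $[0,1]$ as a convex combination of $0/1$ values, and $\sum_{u\in U}z_u=\sum_{S}x_S|S|\le k_L\sum_S x_S=k_L$. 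So every $z\in Q$ satisfies \eqref{eq:matroid poly_uniform}.

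The real content is the reverse inclusion $P\subseteq Q$, and the route I would take is to show that $P$ is an integral polytope whose vertices are exactly $\{z_S:|S|\le k_L\}$; then $P=\operatorname{conv}(\text{vertices of }P)\subseteq Q$ and we are done. A vertex of $P$ is determined by $n=|U|$ linearly independent tight constraints drawn from $\{z_u\ge 0\}_u$, $\{z_u\le 1\}_u$, and the single cardinality constraint $\sum_{u}z_u\le k_L$. At most one of the tight constraints is the cardinality one. If it is slack, all $n$ tight constraints are box constraints fixing distinct coordinates to $0$ or $1$, so the vertex is a $0/1$ vector. If it is tight, the remaining $n-1$ tight box constraints fix $n-1$ coordinates to integers, and these together with the all-ones row are linearly independent (the all-ones row has a nonzero entry on the one free coordinate); then $\sum_u z_u=k_L\in\mathbb{Z}_+$ forces the last coordinate to be an integer lying in $[0,1]$, again $0$ or $1$. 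In either case the vertex lies in $\{0,1\}^U$ with coordinate sum at most $k_L$, i.e.\ equals some $z_S$ with $|S|\le k_L$. Since $P$ is bounded, it is the convex hull of its vertices, which yields $P\subseteq Q$ and hence $Q=P$.

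The main obstacle is exactly this integrality step; the two easy coordinate-sum computations are bookkeeping by comparison. If one prefers to sidestep polyhedral vertex theory, the same inclusion can be obtained constructively by induction on the number of fractional coordinates of a given $z\in P$: pick two fractional coordinates, or a single fractional coordinate together with the slack in the cardinality constraint, and perturb $z$ along a line inside $P$ in both directions until a new constraint becomes tight, thereby writing $z$ as a convex combination of two points of $P$ each with strictly fewer fractional coordinates, the base case being an integral point $z_S$ with $|S|\le k_L$. Either route ultimately reduces the claim to the standard fact that the uniform matroid polytope is the convex hull of the indicator vectors of its independent sets, so the only thing to get right is the case analysis ensuring that tightening the cardinality constraint never creates a fractional vertex.
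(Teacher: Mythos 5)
Your proof is correct, but it follows a genuinely different route from the paper's. The paper goes through the uniform-matroid rank system: it first asserts---deferring to the standard matroid-polytope characterization, with the phrase ``it is not difficult to see''---that $z \in Q$ if and only if $z \geq 0$ and $\sum_{u \in S} z_u \leq \min\{|S|, k_L\}$ for all $S \subseteq U$, and then its only explicit work is a short contrapositive argument showing that this exponential family of inequalities is equivalent to the $2n+1$ constraints in \eqref{eq:matroid poly_uniform}. You bypass the rank system entirely: after the easy inclusion $Q \subseteq P$ (where $P$ is the polytope cut out by \eqref{eq:matroid poly_uniform}), you prove $P \subseteq Q$ directly by showing every vertex of $P$ is integral---at most one of the $n$ linearly independent tight constraints can be the cardinality constraint, and in either case the coordinates are forced to be $0$ or $1$---so $P$ is the convex hull of points $z_S$ with $|S| \leq k_L$, which is exactly $Q$. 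What your approach buys is self-containedness: the step the paper waves off is in fact the substantive content (Edmonds' theorem specialized to the uniform matroid), and you supply an elementary proof of it from basic polyhedral facts, with a constructive fallback via elimination of fractional coordinates. What the paper's formulation buys is generality: the rank-function system is precisely the form that survives the generalization to an arbitrary matroid constraint mentioned in the paper's conclusion, whereas your vertex count exploits the special structure of one cardinality row plus box constraints. As a side remark, your argument also avoids a small slip in the paper's proof, where the contrapositive hypothesis is written as ``$\sum_{u \in S} z_u \leq q(S)$ for some $S$'' when it should be $\sum_{u \in S} z_u > q(S)$.
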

\begin{fullpaper}
\begin{proof}
For any set $S$ of media, we define $q(S) = \min\{ |S|, k_L \}$. 
It is not difficult to see that a vector $z$ is in $Q$ if and only if $z$ satisfies
\begin{align}\label{eq:matroid poly}
\textstyle
\sum_{u \in S} z_u \leq q(S) \ (\forall S \subseteq U),  \quad z \geq 0.
\end{align}
Moreover, $z$ satisfies \eqref{eq:matroid poly} if and only if 
it satisfies \eqref{eq:matroid poly_uniform}. 
The only-if part is clear. 
To see the if part, assume that $\sum_{u \in S} z_u \leq q(S)$ for some $S$. 
If $|S| \geq k_L$, then it holds that $1^\top z \geq \sum_{u \in S} z_u > q(S)=k_L$.  
Otherwise, i.e., $|S| < k_L$, since $\sum_{u \in S} z_u > |S|$, we have $z_u > 1$ for some $u \in S$. 
\end{proof}
\end{fullpaper}

We can rewrite $P_v$ and $P_{F,v}$ as $P_v(z) = p_{uv}z_{u}$ and $P_{F,v}(y) = p_{F,uv}y_{u}$, where $u$ is the only neighbor of $v$. 
Then $f$ and $g$ are simplified as
\begin{align*}
f(x, y) &=\textstyle\sum_{u \in U} \sum_{v \in N_u} p_{uv} (Ax)_u (1 - p_{F,uv}y_{u}),\\
g(x, y) &=\textstyle\sum_{u \in U} \sum_{v \in N_u} p_{uv} y_u\left(1-p'_{uv} (Ax)_u \right).
\end{align*}
The utility functions $f(x, y)$ and $g(x, y)$ are bilinear. 
Moreover, they depend on a fractional budget allocation $Ax \in [0,1]^U$ rather than $x$. 
\begin{lemma}\label{lemma:bilinear}
Assume that $|N_v|=1$ for all $v \in V$. 
For each $x \in D_L$ and $y \in D_F$, it holds that $f(x, y) = f(x', y)$ and $g(x, y)=g(x',y)$ for any $x \in D_L$ such that $Ax = Ax'$. 
\end{lemma}

This lemma gives us an intuition that we solve \eqref{eq:LP_nonlinear} for a fractional budget allocation $r$ and recover a mixed strategy $x$. 
We claim that LP \eqref{eq:LP_nonlinear} is polynomially equivalent to
\begin{align}
\begin{array}{rl}
\max&\ \sum_{u \in U} \sum_{v \in N_u} p_{uv}(1 - p_{F,uv}y^*_{u})r_u  \\
\text{s.t.}&\ \sum_{u \in U} \sum_{v \in N_u} p_{uv} y'_u r'_u \geq 0, \\
			  &\  y'_u = y^*_u - y_u \quad \forall u \in U,  y\in D_F, \\
           &\ r'_u = 1-p'_{uv}r_{u} \quad \forall u \in U, \\
           &\ \sum_{u\in U}r_u\le k_L, \\
           &\ r_u \in [0,1] \quad \forall u \in U. 
\end{array}\label{eq:LP_linear}
\end{align}
Indeed, if $(x, y)$ is an optimal solution for \eqref{eq:LP_nonlinear}, then we obtain an optimal solution $(r, y)$ for \eqref{eq:LP_linear} by setting $r=Ax$. 
Conversely, let $(r, y)$ be any optimal solution for \eqref{eq:LP_linear}. 
We observe that $r \in Q$ by Lemma~\ref{lemma:z in Q}.
If we can construct a mixed strategy $x \in D_L$ such that $r=Ax$, then we see that $(x, y)$ is an optimal solution for \eqref{eq:LP_nonlinear} by Lemma \ref{lemma:bilinear}. 
In the following, we show that 
we can recover $x \in D_L$ such that $r=Ax$ in polynomial time with respect to $\Lambda$. 
%
%
\begin{lemma}\label{lemma:recover}
For any $r^* \in Q$, there exists a polynomial-time algorithm that finds a mixed strategy $x \in D_L$ such that $|\supp(x)| \leq n+1$ and $r^* = \sum_{z \in \supp(x)} x_z z$. 
\end{lemma}
Note that the mixed strategy $x$ in the statement always exists by Carath\'eodory's theorem. 
Lemma \ref{lemma:recover} holds even if some $|N_v|$ is not necessarily equal to one.
A leader's strategy in a Stackelberg equilibrium may have the support of a large size. 

\begin{fullpaper}
To show Lemma \ref{lemma:recover}, we use a result by Gr\"otschel et al.~\cite{GroetschelLS2012}. 
The \emph{separation problem for $Q$} is the problem that receives a vector $r^*$ and either asserts $r^* \in Q$ or finds a hyperplane $a^\top r = b$ such that $a^\top r^* >b$ and $a^\top r \leq b \ (\forall r \in Q)$.
\begin{theorem}[Gr\"otschel et al.~\cite{GroetschelLS2012}]\label{thm:Groetschel}
If the separation problem for $Q$ can be solved in polynomial time, then there is a polynomial time algorithm that, on any input $r^* \in Q$, 
computes $n+1$ pure strategies $z^1, \ldots, z^{n+1}$ and coefficients $\lambda_1, \ldots,  \lambda_{n+1} \geq 0$ such that $r^* = \sum_{i=1}^{n+1} \lambda_i z^i$ and $\sum_{i=1}^{n+1} \lambda_i = 1$. 
\end{theorem}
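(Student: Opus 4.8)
The plan is to combine two ingredients: the ellipsoid method, which upgrades the assumed polynomial-time separation oracle for $Q$ into a polynomial-time strong optimization oracle, and a dimension-reduction recursion that turns that oracle into an algorithmic version of Carath\'eodory's theorem. Since $Q$ is a rational $0/1$-polytope (its vertices are exactly the pure strategies $z\in\{0,1\}^U$ with $\sum_{u} z_u\le k_L$, by Lemma~\ref{lemma:z in Q}), the standard technical preconditions of the ellipsoid method --- an a priori bound on the encoding length of vertices and on a ball circumscribing $Q$ --- hold. Consequently, for any rational objective $c$ we can compute $\max_{r\in Q} c^\top r$ together with an optimal \emph{vertex} of $Q$ in time polynomial in $n$ and the encoding length.

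Given $r^*\in Q$, I would proceed recursively on the dimension of the current face. If $r^*$ is a vertex we are done. Otherwise I first obtain some vertex $z^1$ of $Q$ from the optimization oracle, set $d=r^*-z^1$, and perform a line search: using the separation oracle (by binary search, or by maximizing the scalar $t$ subject to $z^1+t\,d\in Q$), compute the largest $t^*\ge 1$ with $r'\coloneqq z^1+t^*d\in Q$. The point $r'$ lies on the boundary of $Q$, and a query just beyond $r'$ returns a valid inequality $a^\top r\le b$ for $Q$ that is tight at $r'$ but not satisfied with equality on all of $Q$; hence $r'$ lies in the proper face $F=Q\cap\{r : a^\top r=b\}$, whose dimension is strictly smaller than that of $Q$. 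Writing $\lambda=1-1/t^*\in[0,1)$ gives $r^*=\lambda z^1+(1-\lambda)r'$. I then recurse on $F$, for which a separation oracle is immediate: on a query $s$, first test $a^\top s=b$ and return the hyperplane $a^\top r=b$ if it is violated, otherwise call the separation oracle for $Q$.

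For correctness, every vertex of $F$ is also a vertex of $Q$, so unfolding the recursion expresses $r^*$ as a convex combination of vertices of $Q$; the coefficients are products of the various $\lambda$ and $(1-\lambda)$ factors, hence nonnegative and summing to $1$ by construction. Because $\dim F\le \dim Q-1$ and $\dim Q\le n$, the recursion has depth at most $n+1$ and contributes at most one new vertex per level, yielding the claimed $z^1,\dots,z^{n+1}$ and $\lambda_1,\dots,\lambda_{n+1}$. Each level makes polynomially many calls to the optimization and separation oracles and solves a one-dimensional line search, so the total running time is polynomial in $n$ and the encoding length of $r^*$.

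The main obstacle is purely the exact-arithmetic side of the ellipsoid machinery: converting a separation oracle into the ability to produce an \emph{exact} optimal vertex, an exact boundary point $r'$, and an exact supporting inequality $(a,b)$ in polynomial time, rather than only approximate ones. This is precisely the delicate part of the Gr\"otschel--Lov\'asz--Schrijver framework, handled in general by rounding and simultaneous Diophantine approximation. Here it is considerably tamer, since the vertices of $Q$ are $0/1$ vectors and all coordinates of $r^*$ and of the intermediate points have denominators bounded by a polynomial in $n$, so the approximate optima returned by the ellipsoid method can be rounded to exact rational values without loss.
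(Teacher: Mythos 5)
The paper does not actually prove this statement: it is imported verbatim from Gr\"otschel--Lov\'asz--Schrijver, and the paper's only ``proof'' content around it is the short proof of Lemma~\ref{lemma:recover}, which checks that the separation problem for $Q$ is polynomially solvable (via the $2n+1$ inequalities of Lemma~\ref{lemma:z in Q}) and then invokes the theorem as a black box. Your proposal supplies exactly the argument that the paper outsources, and it is the standard one from the cited reference: ellipsoid turns separation into strong optimization over a well-described $0/1$-polytope, and the vertex--line-search--face recursion ($r^*=\lambda z^1+(1-\lambda)r'$ with $r'$ on a proper face, recurse) is the usual algorithmic Carath\'eodory proof, with the correct bookkeeping that vertices of faces are vertices of $Q$, the coefficients telescope to a convex combination, and the depth bound $\dim Q\le n$ yields at most $n+1$ vertices. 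So the approach is correct and is not a different route so much as the route the paper deliberately skipped.

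Two details deserve a caveat. First, your claim that a separation query ``just beyond $r'$'' returns an inequality \emph{tight} at $r'$ does not follow from the bare oracle guarantee: the oracle only separates the query point, and the returned hyperplane need not pass through $r'$; making this exact is precisely the rounding machinery of GLS that you gesture at, and in general one must take the perturbation small relative to the facet-complexity bounds (or argue via the finitely many constraints the oracle can return) to force tightness. Second, your statement that the denominators of $r^*$ and the intermediate points are ``bounded by a polynomial in $n$'' is wrong as written --- $r^*$ is arbitrary rational input --- what is true and what suffices is that their \emph{encoding lengths} are polynomially bounded. Both issues evaporate in the paper's actual setting, which is worth noting: here $Q$ is explicitly the uniform-matroid polytope $\{r: 0\le r\le 1,\ \sum_{u}r_u\le k_L\}$, so the line search $t^*$ and the constraint attaining it can be computed in closed form as a minimum over the $2n+1$ explicit inequalities, the tight inequality at $r'$ is identified exactly, and no ellipsoid or Diophantine approximation is needed at all; integrality of this polytope also justifies your parenthetical that its vertices are exactly the pure strategies, which Lemma~\ref{lemma:z in Q} alone does not state.
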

\begin{proof}[Proof of Lemma \ref{lemma:recover}]
By Lemma~\ref{lemma:z in Q}, the separation problem for $Q$ can be solved in polynomial time as follows. 
We check the feasibility of the $2n+1$ inequalities in \eqref{eq:matroid poly_uniform}. 
If yes, then we assert $z^* \in Q$; otherwise, output any inequality that is not satisfied by $z^*$. 

Then by Theorem~\ref{thm:Groetschel}, we obtain a mixed strategy $x$ by setting $x_{z^i} = \lambda_i  \ (i=1, \ldots, n+1)$ and other elements are zero. 
This strategy $x$ needs memory of polynomial size with respect to $\Lambda$. 
\end{proof}
\end{fullpaper}

Therefore, we can solve \eqref{eq:problem alt} by solving \eqref{eq:LP_linear} and recovering a mixed strategy $x \in D_L$ for each $y^* \in D_F$. 
This algorithm generates $|D_F|$ instances of LP \eqref{eq:LP_linear} and each instance can be solved in polynomial time in $\Lambda$ and $|D_F|$. 
Note that the data size of the LP~\eqref{eq:LP_linear} is bounded by polynomial in $\Lambda$ and $|D_F|$. 
The recovered mixed strategy $x$ has polynomial size in $\Lambda$. 
By summarizing the above arguments, Proposition~\ref{prop:bilinear} is proved. 

\section{Experiments} \label{sec:exp}

In this section, we evaluate the performance of
the proposed approximation algorithm
and the heuristic algorithm on real-world datasets. 
We execute the approximation algorithm \textsf{\small{Approx}}
(the algorithm based on MWU described in Section~\ref{subsec:approx} with $100$ iterations and $\epsilon = 0.5$),
and the heuristic algorithm \textsf{\small{Prop.}} (Algorithm~\ref{alg:heu} with $\ell =10$).
We compare the above algorithms with a baseline algorithm \textsf{\small{Greedy}}, which greedily chooses $k_L$ media to maximize $\sum_{v \in V} P_{v}(z)$.
We conduct a series of experiments on Movielens~\cite{MovieLens} and Yahoo! webscope~\cite{Yahoo2007} datasets to examine the leader's utility.
The dataset MovieLens is constructed from MovieLens 100K 
Dataset\footnote{\url{http://grouplens.org/datasets/movielens/100k/}} with 100,000 ratings ($1$ to $5$) to 1,700 movies by 1,000 users.
From the dataset, we select top $n$ frequently rated movies and constructed a bipartite graph $G$ with $n = 20$ media (movies) and $m = 844$ customers (users) with $|E| = 3506$ edges.
The dataset Yahoo! Webscope is constructed from Yahoo! Search Marketing Advertiser Bidding Data\footnote{\url{https://webscope.sandbox.yahoo.com/catalog.php?datatype=a}}, which contains a bipartite graph between 1,000 search keywords and 10,475 accounts, where each edge represents one bid to advertisement on the keyword with the bid price.
From the dataset, we select top $n$ frequently bidden keywords and constructed a bipartite graph $G$, which has $n = 50$ media (keywords) and $m = 447$ customers (accounts) with $|E| = 871$ edges.
$\U(a, b)$ denotes an uniform distribution with maximum and minimum values $a$ and $b$.
For the above bipartite graphs,
we set each basic activation probability as $ p_{uv} \in \U(0, 0.2)$ for $uv \in E $ as in Wilder
and Vorobeychik~\cite{WilderV19}.
We generate two types of instances;
for each edge $uv \in E$,  the activation probability $p_{F,uv}$ is drawn from a distribution ${\cal D_F}=\U(0, 0.2)$ in the first type of instances,
whereas
that is drawn from a distribution ${\cal D_F}=\U(0.1, 0.9)$ in the second type of instances
that models a scenario
where the follower aims to take customers away from the leader.
We set the leader's budget as $k_L=1, 2, 4$,
whereas the follower's budget is set to be $k_F=2$.
The results reported in Table~\ref{table:result_real} indicate that
our algorithms clearly outperform \textsf{\small{Greedy}}
especially when the follower is eager to strip the leader of her customers; that is, when ${\cal D_F}=\U(0.1, 0.9)$.

\newcommand{\s}{\phantom{.0}}
\begin{conference}
\begin{table}[ht]
\centering
\caption{Results averaged over 30 instances for real-world datasets. }\label{table:result_real}
\vskip -5pt
\setlength{\tabcolsep}{1pt}
\renewcommand{\arraystretch}{.8}
\begin{minipage}{.5\linewidth}
\begin{flushleft}
\begin{tabular}{c|c|rrr}
\multicolumn{5}{c}{\small MovieLens {\scriptsize $((n,m,|E|)=(20, 844, 3506 ))$}}\\\toprule
${\cal D_F}$ & ($k_L, k_F$)       & Greedy               &Approx &Prop.\\\midrule
$\U(\s0, 0.2)$ &$(1, 2)$ & 37.05 & 37.05 & 37.05\\
$\U(\s0, 0.2)$ &$(2, 2)$ & 65.10 & 65.10 & 65.24\\
$\U(\s0, 0.2)$ &$(4, 2)$ &114.22 &114.22 &114.22\\
$\U(0.1, 0.9)$ &$(1, 2)$ & 14.34 & 14.55 & 17.22\\
$\U(0.1, 0.9)$ &$(2, 2)$ & 24.22 & 29.97 & 31.87\\
$\U(0.1, 0.9)$ &$(4, 2)$ & 54.46 & 54.46 & 56.12\\\bottomrule
\end{tabular}
\end{flushleft}
\end{minipage}%
\begin{minipage}{.5\linewidth}
\begin{flushright}
\begin{tabular}{c|c|rrr}
\multicolumn{5}{c}{\small Yahoo! Webscope {\scriptsize$((n,m,|E|)=(50, 447, 871 ))$}}\\\toprule
${\cal D_F}$ & ($k_L, k_F$) & Greedy &Approx &Prop.\\\midrule
$\U(\s0, 0.2)$ &$(1, 2)$ & 5.42 & 5.42 & 5.46\\
$\U(\s0, 0.2)$ &$(2, 2)$ &10.68 &10.68 &10.72\\
$\U(\s0, 0.2)$ &$(4, 2)$ &19.56 &19.56 &19.55\\
$\U(0.1, 0.9)$ &$(1, 2)$ & 2.20 & 3.00 & 3.67\\
$\U(0.1, 0.9)$ &$(2, 2)$ & 5.03 & 6.36 & 7.05\\
$\U(0.1, 0.9)$ &$(4, 2)$ &11.72 &12.68 &13.31\\\bottomrule
\end{tabular}
\end{flushright}
\end{minipage}
\end{table}
\end{conference}

\begin{fullpaper}
\begin{table}[htb]
\centering
\caption{Results averaged over 30 instances for real-world datasets. }\label{table:result_real}
\setlength{\tabcolsep}{5pt}
\begin{tabular}{c|c|rrr}
\multicolumn{5}{c}{\small MovieLens {\scriptsize $((n,m,|E|)=(20, 844, 3506 ))$}}\\\toprule
${\cal D_F}$ & ($k_L, k_F$)       & Greedy               &Approx &Prop.\\\midrule
$\U(\s0, 0.2)$ &$(1, 2)$ & 37.05 & 37.05 & 37.05\\
$\U(\s0, 0.2)$ &$(2, 2)$ & 65.10 & 65.10 & 65.24\\
$\U(\s0, 0.2)$ &$(4, 2)$ &114.22 &114.22 &114.22\\
$\U(0.1, 0.9)$ &$(1, 2)$ & 14.34 & 14.55 & 17.22\\
$\U(0.1, 0.9)$ &$(2, 2)$ & 24.22 & 29.97 & 31.87\\
$\U(0.1, 0.9)$ &$(4, 2)$ & 54.46 & 54.46 & 56.12\\\bottomrule
\multicolumn{5}{c}{} \\
\multicolumn{5}{c}{\small Yahoo! Webscope {\scriptsize$((n,m,|E|)=(50, 447, 871 ))$}}\\\toprule
${\cal D_F}$ & ($k_L, k_F$) & Greedy &Approx &Prop.\\\midrule
$\U(\s0, 0.2)$ &$(1, 2)$ & 5.42 & 5.42 & 5.46\\
$\U(\s0, 0.2)$ &$(2, 2)$ &10.68 &10.68 &10.72\\
$\U(\s0, 0.2)$ &$(4, 2)$ &19.56 &19.56 &19.55\\
$\U(0.1, 0.9)$ &$(1, 2)$ & 2.20 & 3.00 & 3.67\\
$\U(0.1, 0.9)$ &$(2, 2)$ & 5.03 & 6.36 & 7.05\\
$\U(0.1, 0.9)$ &$(4, 2)$ &11.72 &12.68 &13.31\\\bottomrule
\end{tabular}
\end{table}
\end{fullpaper}

\section{Conclusion} \label{sec:con}
We formalized a new model called the \emph{Stackelberg budget allocation game with a bipartite influence model}.
For the general case of our model, we proposed two algorithms: an approximation algorithm which has provable guarantee and a heuristic algorithm empirically outputs a better solution.
We remark that, to the best of our knowledge, our approximation algorithm is the first algorithm with a provable guarantee for the non-zero sum submodular Stackelberg game. 
When the utility functions are bilinear, we proposed our LP-based algorithm and showed that it runs in polynomial time when the follower's budget is constant. 
We remark that in this case, we can generalize the budget constraint to a \emph{matroid} constraint and show a similar result. 
Finally, experimental results indicate that our approximation and heuristic algorithms empirically output good quality solutions especially in the setting that the follower is a powerful competitor.

\section*{Acknowledgments}
This work was partially supported by JST ERATO Grant Number JPMJER1201, Japan, and JSPS KAKENHI Grant Numbers JP17K12744, JP18J23034, JP16K16005, JP17K12646, JP17K00028 and JP18H05291, Japan.

\bibliographystyle{plainurl}
\small
\bibliography{aaai}

\end{document}